\newcommand {\debeq}	{\begin{eqnarray*}}
\newcommand {\fineq}	{\end{eqnarray*}}
\newcommand	{\PP}{\mathbb{P}}
\newcommand	{\EE}{\mathbb{E}}
\newtheorem{theorem}{Theorem}
\newtheorem{lemma}[theorem]{Lemma}
\title{Consistency of Bayesian inference of resolved phylogenetic trees}
\author{Mike Steel}
\ead{mike.steel@canterbury.ac.nz}
\address{Allan Wilson Centre for Molecular Ecology and Evolution, Department of Mathematics and Statistics, University of Canterbury, Christchurch, New Zealand. Email: {\tt mike.steel@canterbury.ac.nz}, Phone: +6421329705.}
\begin{document}

\begin{abstract}
Bayesian inference is now a leading technique for reconstructing phylogenetic trees from aligned sequence data. In this short note, we formally show that the maximum posterior tree topology provides a statistically consistent estimate of a fully-resolved evolutionary tree under a wide variety of conditions.  This includes the inference of 
gene trees from aligned sequence data across the entire parameter range of branch lengths, and under general conditions on priors in models where the usual `identifiability' conditions hold.  We extend this to the inference of species trees from sequence data, where the gene trees constitute `nuisance parameters', as in the program *BEAST.  
This note also addresses earlier concerns  raised in the literature questioning the extent to which statistical consistency for Bayesian methods might hold in general. 
\end{abstract}

\begin{keyword}
Bayesian phylogenetics, statistical consistency, gene tree, species tree
\end{keyword}

\maketitle
\newpage
\section{Introduction}
Bayesian inference has become a mainstream approach for inferring phylogenetic tree topology from aligned DNA sequence data \citep{lem}.  The approach has a number of desirable features, and there exist powerful software packages for analysing genetic sequence data in this way.  At the same time, some potential theoretical limitations of Bayesian phylogenetics have been identified and studied.  These include potential 
problems with the convergence of MCMC-based Bayesian methods  \citep{mos}, and  properties that appear  to be surprising at first, such as the Bayesian star `paradox' \citep{bay1, bay2, bay3}.

A further property of Bayesian phylogentic inference was raised in a simulation study of  \citet{kol}, suggesting  that Bayesian methods applied to unresolved four-leaf trees (with a zero-length interior edge) with certain combinations of long/short pendant branches
 tended to show increasing bias towards one of three particular resolved trees as the sequence length increased. By  contrast, maximum likelihood was found to favour each of the three resolutions equally.    \citet{kol} initially  suggested the possibility  that for data generated by a resolved four-leaf tree with a certain combination of short and long edges,  Bayesian inference might even be statistically inconsistent (i.e. the tree with the highest posterior probability for the data being different from the tree that generated the data, with a probability that does not tend to zero as the sequence length grows) even for models for which maximum likelihood is known to be statistically consistent \citep{cha}. While \citet{kol} stepped back from this suggestion in a subsequent correction to their original paper, the issue drew attention to  a lack of a  formal proof of the statistical consistent of Bayesian inference for in molecular phylogenetics. 
We provide this here by establishing a more general result that includes the phylogenetic setting as a particular case.

This enhanced generality serves a further purpose,  as it allows us to establish formally the  statistical consistency of Bayesian species tree estimation directly from sequence data 
where the gene trees (and their branch lengths) are treated as a further `nuisance parameters'  (as in the program *BEAST \citep{starbeast}). 

While it might be possible that these results could be derived from other theoretical results in  Bayesian statistics, we provide here a self-contained and essentially elementary proof here, that is tailored towards easy application in the phylogenetic setting. This follows the spirit of Joseph Chang's tailored version of Wald's theorem   that provided a convenient tool to check and establish the consistency of maximum likelihood in phylogenetics \citep{cha}, and which curtailed an unproductive debate in the literature about whether the detailed theoretical assumptions of Wald's original theorem applied.

\section{A general result}

Consider the general problem  of identifying  a discrete parameter lying in an arbitrary finite set $A$ from  a sequence of independent and identically distributed (i.i.d.) observations that take values in an arbitrary finite set $U$. Suppose further that  the probability distribution on $U$ is determined not just by the discrete parameter $a \in A$ but also by some additional (`nuisance') parameters.  In this paper,  we will assume that these additional parameters  are continuous, and if we denote the parameter space associated
with each discrete parameter $a \in A$ -- which we denote by $\Theta(a)$-- is an open subset of some Euclidean space.

In the usual phylogenetic setting, $A$ is the set of fully resolved (binary) phylogenetic tree topologies on a given leaf set, $U$ is the set of possible site patterns,
and the parameter set $\Theta(a)$ specifies, for the tree topology $a$ the  branch lengths of the tree  each of which lies in the range $(0, \infty)$, and possibly other parameters relevant to the model.  Thus, if  we are only concerned with branch lengths then $\Theta(a) = (0, \infty)^{2n-3}$ where
$n$ is the number of leaves of tree $a$.  The trees in $A$ may be either rooted or unrooted, and for reconstruction we estimate the same type of tree (thus in the rooted case, the branch lengths are assumed to be ultrametric).

Returning to the general set-up, let 
$p_{(a, \theta)}$ denote the probability distribution on some finite set $U$ determined by the discrete-continuous parameter pair $(a, \theta)$.  Suppose we have
a discrete (prior) probability distribution $\pi$ on $A$, and, for each $a \in A$,  a continuous (prior) probability distribution on $\Theta(a)$ with a probability density function $f_a(\theta)$.
We will suppose that the following conditions hold for all $a \in A$:

\begin{itemize}
\item[(C1)]
$\pi(a) >0$;
\item[(C2)]
The density $f_a(\theta)$ is  continuous, bounded and nonzero on $\Theta(a)$;
\item[(C3)]
The function $\theta \mapsto p_{(a, \theta)}(u)$ is continuous and nonzero on $\Theta(a)$ for each $u \in U$;
\item[(C4)]
For all $\theta \in \Theta(a)$, and all $b \neq a$, we have: $\inf_{\theta' \in \Theta(b)} d(p_{(a, \theta)}, p_{(b, \theta')})>0$.
\end{itemize}
In (C4) and henceforth, $d$ denotes the $L_1$ metric -- that is, for any two probability distributions $p, q$ on  $U$:
$d(p,q) := \sum_{u \in U} |p(u) - q(u)|.$

 In the phylogenetic setting, if  $\pi$ is any of the usual
non-zero priors on binary phylogenetic trees (e.g. the uniform (`proportional to distinguishable arrangements' or PDA) distribution, or the Yule distribution), then condition (C1) is satisfied.  If we take the usual exponential prior on branch lengths then condition (C2) is satisfied.
For all Markov processes on trees, condition (C3) holds (the nonzero condition holds, since in any tree with pendant edges of positive lengths all site patterns have a strictly positive probability).  Finally, for all models for which identifiability holds (e.g. the general time-reversible (GTR)  model or any submodel down to the highly restrictive Jukes-Cantor  model) condition (C4) holds
(see e.g. \citet{ste}; a specific lower bound on $d$ for the two-state symmetric model  is provided via lemma 7.3 of \citet{ste2}).

Now, suppose we are given a sequence ${\bf u} = (u_1,\ldots, u_k) \in U^k$  generated i.i.d. by some unknown pair $(a, \theta)$ and we wish to identify the discrete parameter ($a$)  from ${\bf u}$ given prior densities on $A$
and the continuous parameters.  The
{\em maximum a-posteriori} (MAP) estimator selects the element $b \in A$ that maximizes the posterior probability of $b$ given ${\bf u}$ -- that is, it maximizes
$\pi(b) \EE_{\theta'}[\PP({\bf u}| b, \theta')]$, where:
\begin{equation}
\label{probeq}
\PP({\bf u}| b, \theta) = \prod_{i=1}^k p_{(b, \theta')}(u_i),
\end{equation}
which is the probability of generating the sequence of i.i.d. observations $(u_1, \ldots, u_k)$ from the underlying parameters $(b, \theta')$, 
and where $\EE_{\theta'}$ refers to taking expectation with respect to the prior probability distribution on $\Theta(b)$.

Let $P(a,\theta, k)$ denote the probability that, for a sequence $u_1,\ldots, u_k$  generated i.i.d. by $(a, \theta)$,
the MAP estimator correctly selects $a$.
The following theorem establishes a sufficient condition for the statistical consistency of the MAP estimator in this context.

\begin{theorem}
\label{mainthm}
Provided conditions (C1)--(C4) hold for all $a \in A$, then
$$\lim_{k \rightarrow \infty} P(a, \theta, k) =1$$  for all $a \in A$, and $\theta \in \Theta(a)$.
\end{theorem}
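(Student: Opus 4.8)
The plan is to recast the MAP criterion in terms of the marginal (integrated) likelihood
\[
L(b\mid\mathbf{u}) := \pi(b)\,\EE_{\theta'}[\PP(\mathbf{u}\mid b,\theta')] = \pi(b)\int_{\Theta(b)}\PP(\mathbf{u}\mid b,\theta')\,f_b(\theta')\,d\theta',
\]
and to show that, with probability tending to $1$, this quantity is strictly larger at the true value $b=a$ than at every other $b\in A$. Write $\hat p_k$ for the empirical distribution of $u_1,\dots,u_k$ on $U$ and set $p^\ast := p_{(a,\theta)}$. The starting identity is
\[
\tfrac1k\log\PP(\mathbf{u}\mid b,\theta') = \sum_{u\in U}\hat p_k(u)\log p_{(b,\theta')}(u) = -H(\hat p_k) - D\big(\hat p_k \,\|\, p_{(b,\theta')}\big),
\]
where $H$ denotes Shannon entropy and $D(\cdot\,\|\,\cdot)$ the Kullback--Leibler divergence; all logarithms act on strictly positive quantities by the nonzero part of (C3). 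By the strong law of large numbers $\hat p_k\to p^\ast$ almost surely, and hence $H(\hat p_k)\to H(p^\ast)$.

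Next I would establish a uniform upper bound for every wrong $b\neq a$. Condition (C4) supplies $\delta := \min_{b\neq a}\inf_{\theta'\in\Theta(b)} d(p^\ast,p_{(b,\theta')}) > 0$, the minimum over the finite set $A\setminus\{a\}$ of positive infima. By the triangle inequality $d(\hat p_k, p_{(b,\theta')}) \geq \delta - d(\hat p_k,p^\ast)$ for every $\theta'$, and since $d(\hat p_k,p^\ast)\to 0$ does not depend on $\theta'$, we get $d(\hat p_k,p_{(b,\theta')})\geq \delta/2$ \emph{simultaneously for all} $\theta'$ once $k$ is large. Pinsker's inequality then yields $D(\hat p_k\,\|\,p_{(b,\theta')}) \geq \tfrac18\delta^2$ uniformly in $\theta'$, so $\PP(\mathbf{u}\mid b,\theta') \leq \exp\!\big(k(-H(\hat p_k)-\tfrac18\delta^2)\big)$ for all $\theta'$; integrating against the probability density $f_b$ (which integrates to $1$) gives $\tfrac1k\log L(b\mid\mathbf{u}) \leq -H(\hat p_k) - \tfrac18\delta^2 + \tfrac1k\log\pi(b)$. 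The crucial point here --- and the step I expect to be the main obstacle --- is exactly this uniformity over the possibly unbounded nuisance space $\Theta(b)$, where $p_{(b,\theta')}$ may approach the boundary of the simplex. It is resolved because the separation in (C4) is uniform in $\theta'$ while the empirical fluctuation $d(\hat p_k,p^\ast)$ is \emph{independent} of $\theta'$, so no delicate Laplace-type estimate of the integral is needed.

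For the matching lower bound at $b=a$ I would argue locally using (C1)--(C3). Since $\Theta(a)$ is open and $f_a$ is continuous and nonzero, choose a closed ball $N\subset\Theta(a)$ about $\theta$ of positive Lebesgue volume $V$ on which $f_a\geq \tfrac12 f_a(\theta)>0$ and, by continuity of $\theta'\mapsto p_{(a,\theta')}$ together with positivity, $\sum_u p^\ast(u)\log p_{(a,\theta')}(u)\geq -H(p^\ast)-\varepsilon$ for a preassigned $\varepsilon\in(0,\tfrac1{16}\delta^2)$. On the compact set $N$ the coordinates $p_{(a,\theta')}(u)$ are bounded below by some $c>0$, so $\big|\sum_u(\hat p_k(u)-p^\ast(u))\log p_{(a,\theta')}(u)\big|\le |\log c|\,d(\hat p_k,p^\ast)\to 0$ uniformly on $N$; hence $\PP(\mathbf{u}\mid a,\theta')\geq \exp\!\big(k(-H(p^\ast)-2\varepsilon)\big)$ throughout $N$ for large $k$. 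Restricting the integral defining $L(a\mid\mathbf{u})$ to $N$ and using $\pi(a)>0$ (from (C1)) gives $\tfrac1k\log L(a\mid\mathbf{u}) \geq -H(p^\ast) - 2\varepsilon + \tfrac1k\log\!\big(\pi(a)\,\tfrac12 f_a(\theta)\,V\big)$.

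Finally I would subtract the two bounds. Using $H(\hat p_k)\to H(p^\ast)$, the difference $\tfrac1k\big(\log L(a\mid\mathbf{u})-\log L(b\mid\mathbf{u})\big)$ converges to at least $\tfrac18\delta^2-2\varepsilon>0$ for each fixed $b\neq a$, so $L(a\mid\mathbf{u})>L(b\mid\mathbf{u})$ for all large $k$. As $A$ is finite, almost surely the MAP estimator selects $a$ for all sufficiently large $k$, and therefore $P(a,\theta,k)\to 1$, as required.
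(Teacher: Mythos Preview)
Your argument is correct and, in fact, somewhat cleaner than the paper's. Both proofs compare a local lower bound on the integrated likelihood at $a$ (obtained by restricting the integral to a small ball around the true parameter) with a uniform upper bound on the integrated likelihood at any $b\neq a$, and both ultimately rest on Pinsker's inequality together with (C4). The organization differs, however. The paper first extracts, via Bolzano--Weierstrass, a limit distribution $q$ of near-maximizers $p_{(b,\theta_i)}$, then proves a separate Claim that $q$ is bounded away from the boundary of the simplex, and finally invokes a technical lemma (its Lemma~2) that controls $\sum_u r(u)\log\big(s(u)/q(u)\big)$ in terms of three $\epsilon$'s. You sidestep all of this by applying Pinsker directly to $D(\hat p_k\,\|\,p_{(b,\theta')})$: since (C4) gives a lower bound on $d(p^\ast,p_{(b,\theta')})$ that is uniform in $\theta'$, and the empirical fluctuation $d(\hat p_k,p^\ast)$ is independent of $\theta'$, the KL term is bounded below uniformly over $\Theta(b)$ without ever naming a limiting $q$ or worrying about its coordinates being bounded away from zero. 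This buys you a shorter proof with no auxiliary lemma and no compactness argument. The paper's route, on the other hand, makes the dependence of the constants on $\epsilon_1,\epsilon_2,\epsilon_3$ fully explicit, which could be useful if one wanted quantitative rates rather than mere consistency.
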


\begin{proof}
Our proof relies on a general but technical lemma, the proof of which we defer to the Appendix.

\begin{lemma}
\label{lem1}
For any $\epsilon_1, \epsilon_2>0$, and for any $\delta>0$ that is less that a strictly positive value  (determined just by $\epsilon_1$ and $\epsilon_2$)  the following holds:
For  any finite set $U$, and any four
probability distributions $p, q, r, s$ on $U$ that satisfy the three conditions:
\begin{itemize}
\item[(i)]  $d(p,q) \geq \epsilon_1$;
\item[(ii)]  for all $u \in U$ with $r(u)>0$, $p(u) \geq \epsilon_2$ and $q(u)>0$;
\item[(iii)] $d(p,r) < \delta$ and  $d(p,s) < \delta$;
\end{itemize}
the quantity $Q= \sum_{u \in U: r(u)>0} r(u) \log\left(\frac{s(u)}{q(u)}\right)$
is well defined (i.e. logarithms are applied to positive quantities) and $Q \geq \frac{1}{3}\epsilon_1^2$.
\end{lemma}

\subsection{Application to the proof of Theorem~\ref{mainthm}}

To apply Lemma~\ref{lem1} we need to define the quantities mentioned by it, and we will do this in the order  $p, s$  then $q, r$ followed by $\epsilon_1$ and $\epsilon_2$. Notice that the definition of $q, r$ and $s$ depends on the data, so these probability distributions are random variables (they depend on the data), but this causes no problem for the argument as we remark at the end of the proof.

We  suppose throughout that the sequence ${\bf u} = u_1,\ldots, u_k$ is generated i.i.d. by $(a, \theta_0)$ where $\theta_0$ is any particular element of $\Theta(a$). Then
 the  MAP estimator will correctly select $a$ from ${\bf u}$  if and only if  the {\em Bayes Factor} defined by:
$$BF_{a/b} = \frac{\pi(a) \EE_\theta[\PP({\bf u}| a, \theta)]}{\pi(b) \EE_{\theta'}[\PP({\bf u}| b, \theta')]}$$
is strictly greater than 1 for all $b \neq a$.  By the Bonferroni inequality, it suffices to show that for each $b \neq a$ the probability that ${\bf u}$ is such that $BF_{a/b} >1$ tends to $1$ as $k$ grows.
To achieve this we first observe that 
$BF_{a/b} =  \frac{\pi(a)}{\pi(b)} \cdot R_{a/b}$  where:
\begin{equation}
\label{R_{a/b}1}
R_{a/b}:= \frac{ \EE_\theta[\PP({\bf u}| a, \theta)]}{\EE_{\theta'}[\PP({\bf u}| b, \theta')]},
\end{equation}
and where $\frac{\pi(a)}{\pi(b)}$, is finite and strictly positive by (C1).  Thus, it suffices to show that, for each $b \neq a$ and for 
 any finite constant $M$, the inequality
$
R_{a/b}>M 
$
holds with a probability that tends to $1$ as $k \rightarrow \infty$.  We will establish this inequality by providing an explicit lower bound to the numerator of $R_{a/b}$ and an explicit upper bound to the denominator of $R_{a/b}$, and showing that, with probability tending to $1$ as $k$ grows, their ratio exceeds $M$.  

Before describing the lower bound, observe that we can re-write Eqn.~(\ref{probeq}) as follows:
\begin{equation}
\label{probeq2}
\PP({\bf u}| b, \theta) = \prod_{u \in U}p_{(b, \theta)}(u)^{n_u},
\end{equation}
where, for each $u \in U$, $n_u := |\{i: u_i = u\}|$.

For the lower bound on the numerator of $R_{a/b}$,  consider the subset $N_\tau$ of  $\Theta(a)$ consisting of a closed ball
centered on $\theta_0$ and of radius $\tau>0$.  Note that we can always select a sufficiently small value of  $\tau>0$ for which $N_\tau \subset \Theta(a)$ by the assumption that $\Theta(a)$ is an open subset of some Euclidean space.   Letting $\mu(N_\tau) = \int_{N_\tau} f_a(\theta) d\theta >0$ we have:
\begin{equation}
\label{lower1}
\EE_\theta[\PP({\bf u}| a, \theta)]  = \int_{\Theta(a)} \PP({\bf u}| a, \theta)f_a(\theta)d\theta \geq \int_{N_\tau}\PP({\bf u}| a, \theta)f_a(\theta)d\theta \geq \mu (N_\tau) \cdot \inf_{\theta \in N_\tau} \{\PP({\bf u}| a, \theta)\}.
\end{equation}

\subsection{Lower bound and the distributions $p$ and $s$}
Let $p= p_{(a, \theta_0)}$ (the generating probability distribution on the true parameters) and 
let  $s$ be the probability distribution of the form $p_{(a, \theta)}$ that minimizes $\PP({\bf u}| a, \theta)$ when
$\theta$ is restricted to $N_\tau$;  such a distribution $s$ exists from the compactness of $N_\tau$ and the continuity condition of (C3).  Then, from (\ref{probeq2})  we have:
$\inf_{\theta \in N_\tau} \{\PP({\bf u}| a, \theta)\}=    \prod_{u \in U} s(u)^{n_u}.$
Applying this to (\ref{lower1}) gives:
\begin{equation}
\label{lower2}
\EE_\theta[\PP({\bf u}| a, \theta)]  \geq \mu (N_\tau) \cdot  \prod_{u \in U} s(u)^{n_u}.
\end{equation}

\subsection{ Upper bound and the distributions $q$ and $r$} 

Regarding the upper bound on the denominator of $R_{a/b}$, we have:
\begin{equation}
\label{upper1}
\EE_{\theta'}[\PP({\bf u}| b, \theta')]  \leq \sup_{\theta' \in \Theta(b)} \{\PP({\bf u}|b, \theta')\}.
\end{equation}
Given ${\bf u}$, let $\theta_i$ be a sequence of elements of $\Theta(b)$ for which $\lim_{i \rightarrow \infty} \PP({\bf u}|b, \theta_i)= \sup_{\theta' \in \Theta(b)} \{\PP({\bf u}|b, \theta')\}$.

Notice that   $p_{(b, \theta_i)}$, $i \geq 1$, is a sequence in a bounded subset of Euclidean space (the probability simplex) and so, by the Bolzano--Weierstrass theorem, it has a convergent subsequence, with limit
$q$ (a probability distribution on $U$). 

It remains to specify the fourth distribution $r$, which is determined purely by the data,  and records the proportion of occurrences of the various outcomes. That is, for each 
$u \in U$ let $r(u) := \frac{1}{k}n_u$.
Notice that $r = (r(u): u \in U)$ is a (empirical) probability distribution on $U$ (i.e. its entries are nonzero and sum to $1$). In the phylogenetic setting
$r$ describes the frequency of site patterns in the data.

\subsection{Combining the two bounds}

Eqns. (\ref{R_{a/b}1}), (\ref{lower2}) and (\ref{upper1}) gives:
\begin{equation}
\label{Req}
R_{a/b} \geq \frac{ \mu (N_\tau) \cdot  \prod_{u \in U} s(u)^{n_u}}{ \prod_{u \in U} q(u)^{n_u}}.
\end{equation}
By (C3), $p(u)>0$ for all $u \in U$, and by the continuity condition (C4), we can select $\tau>0$ sufficiently small so that $s(u) >0$ for all $u \in U$. 
Suppose there exists some $u_0 \in U$ with $r(u_0)>0$ (i.e. $n_{u_0} \geq 1$) and with $q(u_0)=0$.   Then  Eqn. (\ref{Req}) implies that $R_{a/b} = +\infty$ and so Bayesian inference will select
$a$ over $b$. Otherwise we may assume that $q(u)>0$ for all $u \in U$ for which $r(u)>0$, in which case we can take logarithms of both sides of Eqn. (\ref{Req}) and so 
obtain the fundamental inequality:
\begin{equation}
\label{justit}
\log(R_{a/b}) \geq \log(\mu(N_\tau)) +k \sum_{u \in U: r(u)>0} r(u) \log\left(\frac{s(u)}{q(u)}\right).
\end{equation}

\subsection{Definitions of $\epsilon_1$ and $\epsilon_2$}

 Fix $b\in A-\{a\}$ and let
$\epsilon_1 := \inf_{\theta' \in \Theta(b)} \{d(p, p_{(b, \theta')})\} \mbox{ and let  }   \epsilon_2: = \min\{p(u): u \in U, r(u)>0\}.$
Notice that $\epsilon_1>0$ by (C4) and $\epsilon_2>0$ by (C3).

\subsection{Completing the argument}
Returning to the proof of Theorem~\ref{mainthm},  we are now in a position to apply Lemma~\ref{lem1}.   First observe that parts (i) and (ii) of Lemma~\ref{lem1} hold by definition of
$\epsilon_1$ and $\epsilon_2$, respectively.

Next, observe that the event that $d(p,r) \leq \delta$  has 
probability converging to $1$ as $k$ grows, by the law of large numbers.  Thus, with probability converging to 1 as $k \rightarrow \infty$ the first half of Part (iii) of Lemma~\ref{lem1} holds (i.e. $d(p, r)< \delta$).
Moreover,  by  the continuity condition in (C3), we can select $\tau>0$ sufficiently small so that $d(p,s) < \delta$, and so the second half of Part (iii) of  Lemma~\ref{lem1} also holds.

In summary, with probability converging to 1 as $k$ grows, the conditions of Lemma~\ref{lem1} are satisfied, in which case
(by (\ref{justit}))  $$\log(R_{a/b}) \geq \log(\mu(N_\tau)) + k\cdot \frac{1}{3}\epsilon_1^2.$$
Thus, with probability converging to 1 as $k$ grows, for any finite value $M$,   
$R_{a/b} >M$.    By the comments
following Eqn. (\ref{R_{a/b}1}), this completes the proof. 
\hfill$\Box$.

{\bf Remark} In the proof, notice that only the probability distribution $p = p_{(a,\theta_0)}$ is fixed, the other three distributions $r, s, q$ depend on the data ${\bf u}$ that is generated by $p$. However, Lemma~\ref{lem1}  quantifies over
all choices of $r,s,q$ once the $\epsilon_1$ and $\epsilon_2$ values have been specified, and these two $\epsilon_i$ values  depend ultimately just on $p$ by definition).

\end{proof}

\section{Inferring species trees directly from sequences with gene trees treated as `nuisance parameters'}
Consider a fully resolved species tree with branch parameters corresponding to inter-speciation times, and ancestral population sizes. Such a model induces a probability distribution on gene trees under a process of incomplete lineage sorting that is modelled by the multi-species coalescent model \citep{deg}.  Suppose we generate $N$ independent gene trees under this process, and on each gene tree, we evolve sequence sites under a time-reversible site substitution model in which the branch lengths on the gene tree are (in expectation) are an i.i.d. scalar multiple of the branch lengths in the species tree (i.e. we allow different genes to evolve at different rates, but assume that these rates are chosen independently from a given distribution).  

Now, for any fully resolved species tree $T$ and  any tree $T'$ on the same leaf set  that has a different topology from $T$, there exists at least one  triplet of taxa $x,y,z$, say, for which $T|\{x,y,z\} = xy|z$ and $T'|\{x,y,z\} \neq xy|z$.  Under the multispecies coalescent, if $T$ is the generating species tree, then the probability that the induced gene tree has the topology $xy|z$ is strictly greater than the probability it has one of the other two topologies (which have equal probability) \citep{deg}.   Moreover, for any time-reversible site substitution process, the probability that two taxa are both in a given state (say $0$) is a continuous and strictly monotone decreasing function of the temporal separation between them \citep{ald}.   

Consequently, if $T$ is the generating species tree then the probability that any given sequence site has the same given state ($0$) for taxa $x$ and $y$ is strictly larger than that for $T'$; moreover there is a strictly positive lower bound on this positive difference that applies for any such $T' \neq T$ regardless of its branch lengths, so (C4) holds.  
If we now take $A$ to be the finite set of species tree topologies and $U$ to be site patterns, then the conditions for Theorem~\ref{mainthm} apply and so the posterior probability of the generating species tree converges to 1 as $k \rightarrow \infty$. Notice that $N$ does not need to converge to infinity here, nor does the length of sequences for any one gene; only  the total sequence length ($k$) needs to do so. 

\section{Concluding comments}
 In certain Bayesian implementations, the output tree is not the tree that is most frequently found;  rather, a score is assigned to each cluster (subset of taxa) according to its frequency as a clade in the posterior distribution of trees, and 
a consensus tree is constructed on the clusters with the highest posterior support  \citep{starbeast}.  There are various options here as to how this can be implemented, but it is clear that, in general, such a tree could differ from the MAP tree on a given set of data.  This 
raises an obvious question:  Is this consensus tree constructed from the clusters with highest posterior support as clades a consistent estimator of the true species tree?
In the limit, any such tree will converge on the true tree (and the MAP tree) as $k$ (the sequence length) grows, for the following reason: Since we are assuming that the species tree $T$ is fully resolved and that the posterior probability of $T$ converges to 1 (with increasing $k$), the only clusters that will have a posterior probability greater than any positive value  $\epsilon>0$ for all $k$  will be clades in $T$ (and each clade in $T$ will have posterior support approaching 1 as $k$ grows);  otherwise, if some cluster $C$ not in $T$ had this property, then  another tree $T'$ would exist for which the posterior probability of $T'$ would be at least $\epsilon'>0$ for all $k$, contradicting the assumption that the posterior probability of $T$ converges to 1 as $k \rightarrow \infty$ (we can take $\epsilon'$ to be $\epsilon$ divided by the number of fully resolved trees that contain the cluster $C$ as a clade). 

 For future work, the consistency of phylogenetic questions on nonresolved trees could be of interest, as in that case condition (C4) does not hold. 

\subsection{Acknowledgments} I thank Tandy Warnow and Elchanan Mossel for helpful discussion and encouragement,  and Joe Thornton for clarifying some comments in 
\citet{kol}.

\section{References}

\bibliographystyle{model2-names}
\bibliography{bayes_mike_steel}


\section*{Appendix:   Proof of Lemma~\ref{lem1}}
\label{proof}
\begin{proof}
We will require at the outset that $\delta< \min\{\epsilon_1, \frac{1}{2}\epsilon_2\}$; later we place a third  upper bound on $\delta$.
Applying the triangle inequality to  conditions (ii) and (iii), with   $\delta < \frac{1}{2} \epsilon_2$ implies that $r(u)$ and $s(u)$ are both at least $\frac{1}{2} \epsilon_2$ for all $u \in U$, 
and so $Q$ is well-defined (i.e. logarithms are only applied to positive entries). 
Let $\eta = \min\{q(u): r(u)>0\}$.  By condition (ii), $\eta>0$.  For each $u \in U$, let $\Delta_u:= r(u) - s(u)$.  Then:
\begin{equation}
\label{KL}
\sum_{u \in U} r(u) \log\left(\frac{s(u)}{q(u)}\right)  =\sum_{u \in U} s(u)\log\left(\frac{s(u)}{q(u)}\right)  + \sum_{u \in U} \Delta_u \log\left(\frac{s(u)}{q(u)}\right).
\end{equation}
Now, the first term on the right hand-side of (\ref{KL}) is simply the Kullback-Leibler separation of $s$ and $q$ and, by Pinsker's Inequality \citep{cov},  this is bounded below by
$\frac{1}{2}d(s,q)^2$.  Moreover, by the triangle inequality, $d(s,q) \geq d(p,q) - d(p,s) \geq \epsilon_1-\delta$ (by conditions (i) and (iii)) and since $\delta < \epsilon_1$ (so $\epsilon_1-\delta>0$)  the first term on the right of (\ref{KL}) is
bounded below by $\frac{1}{2}(\epsilon_1-\delta)^2.$  

Concerning the second term on the right of (\ref{KL}), its absolute value is bounded above by:
$$\sum_{u \in U} |\Delta_u| \cdot \max_{u \in U}|\log\left(\frac{s(u)}{q(u)}\right)|=  d(r,s) \cdot \max_{u \in U}|\log\left(\frac{s(u)}{q(u)}\right)|.$$
Again invoking the triangle inequality,  $d(r,s) \leq d(r,p)+d(p,s) \leq 2\delta$ (by condition (iii)).  Moreover, since $s(u) \geq p(u) -\delta\geq \epsilon_2-\delta$ (by condition (ii)) and since $\delta< \epsilon_2$ (so $\epsilon_2-\delta>0$)  and $q(u) \geq \eta$:
$$ \max_{u \in U}|\log\left(\frac{s(u)}{q(u)}\right)| \leq \max_{u \in U} |\log(s(u))| + \max_{u \in U} |\log(q(u))|  \leq |\log(\epsilon_2-\delta)|+|\log \eta |.$$

Thus we select $\delta>0$ sufficiently small (in addition to the earlier two upper bounds on $\delta$) so that
$$\frac{1}{2}(\epsilon_1-\delta)^2 - 2\delta \cdot (|\log(\epsilon_2- \delta)| + |\log \eta|) \geq \frac{1}{3} \epsilon_1^2.$$
then the bounds placed above on  the terms in (\ref{KL})
ensure that $Q \geq \frac{1}{3}\epsilon_1^2$ as required. 

\end{proof}

\end{document}